\newcommand{\nn}{\nonumber}
\renewcommand{\epsilon}{\varepsilon}
\renewcommand{\phi}{\varphi}
\newcommand{\R}{\mathbb{R}}
\renewcommand{\S}{\mathbb{S}}
\newcommand{\Z}{\mathbb{Z}}
\DeclareMathOperator{\tr}{tr}
\DeclareMathOperator{\curl}{curl}
\providecommand{\abs}[1]{\left\lvert#1\right\rvert}
\providecommand{\norm}[1]{\lVert#1\rVert}
\theoremstyle{plain}
\newtheorem{theorem}{Theorem}
\theoremstyle{definition}
\begin{document}
\title{Generalized helicity and Beltrami fields}
\author{Roman V. Buniy}
\email{roman.buniy@gmail.com}
\affiliation{Schmid College of Science, Chapman University, Orange, CA 92866, USA}
\affiliation{Isaac Newton Institute, University of Cambridge, Cambridge, CB3 0EH, United Kingdom}
\author{Thomas W. Kephart}
\email{tom.kephart@gmail.com}
\affiliation{Department of Physics and Astronomy, Vanderbilt University, Nashville, TN 37235, USA}
\affiliation{Isaac Newton Institute, University of Cambridge, Cambridge, CB3 0EH, United Kingdom}
\date{\today}
\begin{abstract}
We propose covariant and non-abelian generalizations of the magnetic helicity and Beltrami equation. 
The gauge invariance, variational principle, conserved current, energy-momentum tensor and choice of boundary conditions elucidate the subject. 
In particular, we prove that any extremal of the Yang-Mills action functional  $\tfrac{1}{4}\int_\Omega\tr{F_{\mu\nu}F^{\mu\nu}}\,d^4x$ subject to the local constraint $\epsilon^{\mu\nu\alpha\beta}\tr{F_{\mu\nu}F_{\alpha\beta}}=0$ satisfies the covariant non-abelian Beltrami equation.
\end{abstract}
\pacs{}
\maketitle

\section{Introduction}

The introduction of the concept of magnetic helicity \cite{Woltier} revolutionized our understanding of plasma physics phenomena, from  dynamos \cite{Moffatt} to the solar wind, to the operation of controlled fusion devices \cite{Marsh1996,Bellan2000,Bellan2006}, and it plays a central role when applied to a variety of concepts such as the Beltrami equation and  force-free fields in the form of Taylor states \cite{Taylor}.
Helicity was introduced in a three-dimensional context, but we find it useful to covariantize it \cite{carter,bekenstein}.
We propose and explore a covariantization of the Beltrami equation, and in particular study its relation to helicity conservation.
For instance, new terms can arise that vanish in the non-relativistic case, but which can contribute when we have a multi-component helicity system where the components are moving with relativistic velocities with respect to each other.
This could be the case for helicity in relativistic plasmas \cite{Lichnerowicz} ejected from astrophysical objects colliding with another plasma clouds \cite{Goedbloed}, but it can also apply in particle physics to the hadronization process where relativistic flux tubes interact \cite{Casher:1978wy,Buniy:2002yx} or to the early universe  as it cools through various epochs.
Some of these examples are Yang-Mills systems and they will need non-abelian generalization.

\section{Non-covariant case}

We first briefly review the three-dimensional Beltrami equation and helicity.

Consider a region $\Omega\subset\R^3$ and let $(x,\nabla)$ be the Cartesian coordinates and the derivative operator in $\Omega$.
A vector field $B$ in $\Omega$ is called a Beltrami vector field if it satisfies
\begin{align}
  B\times(\nabla\times B)=0,
  \label{b3def}
\end{align}
where $\times$ is the vector product in $\R^3$.
Eq.~\eqref{b3def} implies that the vectors $\nabla\times B$ and $B$ are parallel, which leads to the Beltrami equation
\begin{align}
  \nabla\times B=\lambda B,
  \label{b3}
\end{align}
where $\lambda$ is a scalar function in $\Omega$.

Eq. \eqref{b3} shows that Beltrami fields are eigenfields of the curl operator. 
Eigenfields of the curl operator can be related to more familiar functions by using the identity
\begin{align}
  \nabla\times\nabla\times B=\nabla(\nabla\cdot B)-\Delta B,
  \label{}
\end{align}
where $\cdot$ is the scalar product and $\Delta$ is the Laplace operator in $\R^3$.
It follows that the square of the curl operator, when restricted to the space of divergence-free vector fields, is the negative of the Laplace operator $-\Delta$.
Thus, in some sense, the curl operator is the square root of the operator $-\Delta$ (which itself is a positive operator).

The restriction to the space of divergence-free vector fields is not accidental, but is required by physical considerations of $B$ being a magnetic field.
In such a case, the divergence-free condition $\nabla\cdot B=0$ for $B$ in \eqref{b3} implies 
\begin{align}
  B\cdot\nabla \lambda=0,
  \label{b3lambda}
\end{align}
so that $\lambda$ is constant along any field line of $B$.
Eq. \eqref{b3lambda} is the consistency condition for \eqref{b3}.
According to the Beltrami equation \eqref{b3}, the Maxwell current $J=\nabla\times B$ is parallel to the magnetic field, $J=\lambda B$.
Note that the current conservation $\nabla\cdot J=0$ also implies \eqref{b3lambda}. 

To learn more about a Beltrami field $B$, it is instructive to consider a vector potential $A$ such that $B=\nabla\times A$.
Since a vector potential is defined only up to the gradient of an arbitrary function, it will be important to ensure gauge invariance of various physical quantities under a gauge transformation 
\begin{align}
  &A\mapsto A+\nabla g,\label{gauge3-a}
\end{align}
where $g$ is an arbitrary real-valued function in $\Omega$.
The two simplest such gauge invariant quantities are the energy $W$ and helicity $H$ of the field $B$ in the region $\Omega$,
\begin{align}
  &W=\int_{\Omega}\tfrac{1}{2}\norm{B}^2\,d^3x,\label{e3}\\
  &H=\int_{\Omega}A\cdot B\,d^3x,\label{h3}
\end{align}
where $\norm{\ }$ is the scalar norm in $\R^3$.

Convergence of the integrals in \eqref{e3} and \eqref{h3} imposes certain restrictions on $A$ and $B$.
We are concerned here with the case of a non-compact $\Omega$ and restrictions derived from the required behavior of $A$ and $B$ for $\norm{x}\to\infty$.
In such a case, convergence of the integral in \eqref{h3} implies
\begin{align}
  A=O(\norm{x}^{p}), \ \norm{x}\to\infty, \ p<-1.\label{a3asympt}
\end{align}
This leads to $B=O(\norm{x}^{p-1})$, $\norm{x}\to\infty$, which means that allowed field configurations do not include magnetic monopoles.
It follows that the integral in \eqref{e3} also converges for such fields.

The gauge invariance of the energy is obvious, while the corresponding gauge transformation of the helicity is
\begin{align}
  H\mapsto H+\int_{\partial\Omega}g(n\cdot B)d^2\sigma,
  \label{h3transf}
\end{align}
where $\partial\Omega$ is the boundary of $\Omega$, $n$ is the unit normal vector to $\partial\Omega$, and $d^2\sigma$ is the area differential on $\partial\Omega$.
Since we require gauge invariance of $H$, we set the boundary condition
\begin{align}
  (n\cdot B)\vert_{\partial\Omega}=0,
  \label{bc3}
\end{align}
which means that the field lines do not cross the boundary.
For a non-compact $\Omega$, the asymptotic behavior $B=O(\norm{x}^{p-1})$, $\norm{x}\to\infty$ ensures the gauge invariance of $H$ as well since the boundary integral in \eqref{h3transf} vanishes.

The helicity is often conserved in physical systems involving magnetic fields, and this restricts their dynamics. 
For example, suppose that $B$ is a field in $\Omega$ satisfying the boundary condition \eqref{bc3} which minimizes its energy $W$ and conserves its helicity $H$.
The resulting variational problem is equivalent to finding $B$ minimizing the functional
\begin{align}
  &W-\tfrac{1}{2}\lambda H=\int_{\Omega}L\, d^3 x\label{functional3}
\end{align}
with the Lagrangian
\begin{align}
  &L=\tfrac{1}{2}\norm{B}^2-\tfrac{1}{2}\lambda A\cdot B.\label{lagrangian3}
\end{align}
(We have chosen the form of the Lagrange multiplier $\lambda$ which leads to the conventional form of the Beltrami equation.)
For an infinitesimal variation of the vector potential $\delta A$, we find
\begin{align}
  \delta L=(\nabla\times B-\lambda B)\cdot\delta A+\nabla\cdot\bigl[(-B+\tfrac{1}{2}\lambda A)\times\delta A\bigr],
  \label{var3-l}
\end{align}
which leads to
\begin{align}
  \delta(W-\tfrac{1}{2}\lambda H)=\int_{\Omega}(\nabla\times B-\lambda B)\cdot\delta A\, d^3x+\int_{\partial\Omega} n\cdot\bigl[(-B+\tfrac{1}{2}\lambda A)\times\delta A\bigr]\,d^2\sigma.
  \label{var3}
\end{align}
We eliminate the boundary term in \eqref{var3} by setting the boundary condition
\begin{align}
  \delta A\vert_{\partial\Omega}=0.
  \label{bc3var}
\end{align}
The variation \eqref{var3} vanishes for any $\delta A$ satisfying \eqref{bc3var} if and only if $\nabla\times B=\lambda B$.
Thus, a Beltrami field with $\lambda=\textrm{const}$ is a stationary point of the energy functional $W$ subject to the condition $H=\textrm{const}$.
It can be further proved that such a field is a local minimum of $W$ with constant $H$.

Another aspect of the helicity relates to the conserved Noether current.
Gauge transformations are the symmetry operations of the theory defined by the Lagrangian $L$.
The proof of the invariance of the theory requires showing (without using the equation of motion) that $L$ is changed only by the divergence term.
(For the following derivation we assume $\lambda$ is constant.)
Indeed, for a gauge transformation \eqref{gauge3-a} with $\delta A=\nabla g$, \eqref{var3-l} becomes
\begin{align}
  &\delta L=\nabla\cdot\Gamma,\label{delta-noether-L3} \\
  &\Gamma=-\tfrac{1}{2}\lambda gB.\label{gamma3}
\end{align}
On the other hand, using the equation of motion we find  
\begin{align}
  &\delta L=\nabla\cdot\biggl(\frac{\partial L}{\partial\nabla A_k}\delta A_k\biggr).
  \label{delta-L3}
\end{align}
Equating \eqref{delta-noether-L3} and  \eqref{delta-L3}, we arrive at the conserved Noether current ($\nabla\cdot j=0$),
\begin{align}
  &j=\frac{\partial L}{\partial\nabla A_k}\delta A_k-\Gamma,\label{noether3def}\\
  &j=(-B+\tfrac{1}{2}\lambda A)\times\nabla g+\tfrac{1}{2}\lambda gB.\label{noether3}
\end{align}
Since the gauge function $g$ is arbitrary, we can define another conserved Noether current $k$ by
\begin{align}
  &gk=j+\nabla\times\bigl[(-B+\tfrac{1}{2}\lambda A)g\bigr].\label{}
\end{align}
and find
\begin{align}
  k=-\nabla\times B+\lambda B.
  \label{current3}
\end{align}
Now the Beltrami equation \eqref{b3} gives $k=0$, so that the Noether current associated with the helicity, $\lambda B$, equals the Maxwell current $J=\nabla\times B$.
As expected, there is only one independent conserved current in the problem.

Computing the Noether energy-momentum tensor
\begin{align}
  {\theta^{i}}_j&=\frac{\partial L}{\partial(\nabla_i A_k)}\nabla_j A_k-{\delta^i}_j L\nn\\&=(B_l-\tfrac{1}{2}\lambda A_l)\epsilon^{lik}\nabla_j A_k-{\delta^i}_j\bigl(\tfrac{1}{2}\norm{B}^2-\tfrac{1}{2}\lambda A\cdot B\bigr),
  \label{}
\end{align}
we find that its divergence
\begin{align}
  \nabla_i{\theta^{i}}_j&=(-\nabla\times B+\lambda B)^k\nabla_j A_k
  \label{}
\end{align}
vanishes for any solution of the Beltrami equation \eqref{b3}, which implies the conservation equation  $\nabla_i{\theta^{i}}_j=0$.

We now derive the lower bound for the energy in terms of helicity and constant $\lambda$ \cite{arnold-khesin}.
We first integrate the Beltrami equation \eqref{b3} once to obtain 
\begin{align}
  \nabla\times A=\lambda A+\nabla\phi,
  \label{b3sol}
\end{align}
where $\phi$ is an arbitrary scalar function in $\Omega$.
We can now use the gauge transformation \eqref{gauge3-a} with $g=-\lambda^{-1}\phi$ to replace \eqref{b3sol} with
\begin{align}
  \nabla\times A=\lambda A,
  \label{a3}
\end{align}
which is of the same form as \eqref{b3}.
Hence in this gauge $B=\lambda A$, which gives
\begin{align}
  W=\tfrac{1}{2}\lambda H.
  \label{wh3}
\end{align}
Although \eqref{a3} is not gauge invariant, its consequence, \eqref{wh3}, is gauge invariant. 
We conclude that the minimal value of the variational functional $W-\tfrac{1}{2}\lambda H$ equals zero for any solution of the Beltrami equation with constant $\lambda$.

The field satisfying $B=\lambda A$ saturates the lower bound for the energy in terms of helicity \cite{arnold-khesin}.
To derive this, we consider a non-local operator $\curl^{-1}$ acting on the space of divergence-free vector fields.
We use the Schwarz inequality
\begin{align}
  \biggl\vert\int_{\Omega} B\cdot\curl^{-1}{B}\, d^3 x\biggr\vert\le\biggl[\int_{\Omega}\norm{B}^2\, d^3 x\biggr]^{1/2}\biggl[\int_{\Omega}\norm{\curl^{-1}{B}}^2\, d^3 x\biggr]^{1/2}
  \label{}
\end{align}
and the Poincar\'{e} inequality
\begin{align}
  \int_{\Omega}\norm{\curl^{-1}{B}}^2\, d^3 x\le C^{-2}\int_{\Omega}\norm{B}^2\, d^3 x,
  \label{}
\end{align}
where $C>0$ is a certain constant depending on $\Omega$.
Combination of the two inequalities gives $W\ge\tfrac{1}{2} C\abs{H}$.
Finally, using the Rayleigh min-max theorem
\begin{align}
  &B\cdot\curl^{-1}{B}\le\abs{\mu}_\textrm{max}\norm{B}^2,
  \label{}
\end{align}
where
\begin{align}
  &\abs{\mu}_\textrm{max}=\max_a{\abs{\mu_a}},\\
  &\curl^{-1}B_a=\mu_a B_a,
  \label{}
\end{align}
we see that we can use $C=2\abs{\mu}^{-1}_\textrm{max}$ and find
\begin{align}
  W\ge\tfrac{1}{2}\abs{\mu}^{-1}_\textrm{max}\abs{H}.
  \label{bound3}
\end{align}
It is clear that the field satisfying $B=\lambda A$ saturates the bound \eqref{bound3} since in this case we have $\abs{\mu}^{-1}_\textrm{max}=\abs{\lambda}$ and $W=\frac{1}{2}\abs{\lambda}\abs{H}$.

\section{Covariant case}

The proceeding non-covariant analysis is sufficient for the description of magnetic fields in nonrelativistic plasmas.
Generalizations to the electric case have been carried out \cite{Ehelicity} and applied \cite{Eapp}, but relativistic plasmas require a full covariant analysis.
In particular, this applies to Beltrami fields and helicity.

To deriving the covariant forms of equations obtained in the preceding section, we consider Lorentzian $(\R^{1,3},\Omega,x,\nabla)$, where $\R^{1,3}$ has a constant pseudo-Riemannian metric with signature $(1,3)$.
The magnetic field $B$ is now a part of the gauge field strength tensor $F$.
Using
\begin{align}
  B_i=\tfrac{1}{2}\epsilon_{ijk}F^{jk},
  \label{bf3def}
\end{align}
we first write \eqref{b3def}, \eqref{b3}, \eqref{b3lambda} in the form
\begin{align}
  &{F_i}^j\nabla^{k} F_{jk}=0,\label{b3deff}\\
  &\nabla^{j} F_{ij}=\tfrac{1}{2}\lambda\epsilon_{ijk}F^{jk},\label{b3f}\\
  &\epsilon^{ijk}F_{jk}\nabla_i \lambda=0,\label{b3lambdaf}
\end{align}
then setting
\begin{align}
  &E_i=F_{i0},\label{be3def}\\
  &\epsilon_{0ijk}=\epsilon_{ijk},\label{eps4def}\\
  &\lambda_0=\lambda\label{lambda4def},
\end{align}
we arrive at the covariant form of \eqref{b3deff}, \eqref{b3f}, \eqref{b3lambdaf},
\begin{align}
  &{F_\alpha}^\mu\nabla^\nu F_{\mu\nu}=0,\label{b4def}\\
  &\nabla^\nu F_{\mu\nu}=\tfrac{1}{2}\epsilon_{\mu\nu\alpha\beta}\lambda^\nu F^{\alpha\beta},\label{b4}\\
  &\epsilon^{\mu\nu\alpha\beta}F_{\alpha\beta}\nabla_\mu \lambda_\nu=0.\label{b4lambda}
\end{align}
Covariantization requires that we identify non-covariant $\lambda$ with the time component of a $4$-vector $\lambda$.
Note that the left-hand side of \eqref{b4lambda} vanishes identically if we set
\begin{align}
  \nabla_\mu\lambda_\nu-\nabla_\nu\lambda_\mu=0,
  \label{lambda4}
\end{align}
The requirement \eqref{lambda4} will appear later in the variational formulation of the problem.

Similarly to \eqref{b3} implying \eqref{b3lambda} and \eqref{b3lambda} not implying \eqref{b3} for $\nabla\cdot B=0$, we have \eqref{b4} implying \eqref{b4lambda} and \eqref{b4lambda} not implying \eqref{b4}.
However, although \eqref{b3def} and \eqref{b3} are equivalent, their covariant counterparts \eqref{b4def} and \eqref{b4} are not equivalent; in fact, none of the two implies the other.

In terms of the $E$ and $B$ fields, the time and space components of \eqref{b4def} become
\begin{align}
  &E\cdot \nabla_0 E-E\cdot(\nabla\times B)=0,\label{b4def1}\\
  &E(\nabla\cdot E)+B\times(\nabla_0 E)-B\times(\nabla\times B)=0,\label{b4def2}
\end{align}
the time and space components of \eqref{b4} become
\begin{align}
  &-\nabla\cdot E=\lambda\cdot B,\label{b41}\\
  &-\nabla_0 E+\nabla\times B=\lambda_0 B+\lambda\times E,\label{b42}
\end{align}
and \eqref{b4lambda} becomes
\begin{align}
  &-B\cdot\nabla \lambda_0+B\cdot\nabla_0 \lambda+E\times(\nabla\times \lambda)=0.\label{b4lambda1}
\end{align}
Note that the left-hand side of \eqref{b4lambda1} vanishes identically if we set $\nabla\lambda_0-\nabla_0\lambda=0$ and $\nabla\times\lambda=0$, which combine to give \eqref{lambda4}.

A consistency condition is required for compatibility of \eqref{b4def} and \eqref{b4} for arbitrary $\lambda$.
Indeed, combining these equations, we find
\begin{align}
  &F^{\gamma\mu}\epsilon_{\mu\nu\alpha\beta}\lambda^\nu F^{\alpha\beta}=0.\label{f-eps-lambda-f}
\end{align} 
Considering the values $\gamma=0$ and $\gamma=i$ in \eqref{f-eps-lambda-f}, yields $\lambda^0 E\cdot B=0$ and $\lambda^i E\cdot B=0$, respectively.
This requires the same consistency condition $E\cdot B=0$ for each values of $\gamma$, which we write in the covariant form
\begin{align}
  \epsilon^{\mu\nu\alpha\beta}F_{\mu\nu}F_{\alpha\beta}=0.
  \label{epsff}
\end{align}
We could have arrived at the consistency condition $E\cdot B=0$ also by noting that it is an appropriate covariant form of the three-dimensional constraint $E=0$. 

The covariant analogue of the energy $W$ is the negative of the Maxwell action
\begin{align}
  W&=\int_{\Omega}\tfrac{1}{4}F_{\mu\nu}F^{\mu\nu} d^4x,\nn\\
  &=\int_{\Omega}\left(-\tfrac{1}{2}\norm{E}^2+\tfrac{1}{2}\norm{B}^2\right)d^4x,\label{e4}
\end{align}
(We have introduced the sign difference in the definition of $W$ so that the non-covariant $W$ is a limiting case of the covariant $W$.)
As a covariant form of the helicity $H$, we propose
\begin{align}
  H(f)&=-\int_\Omega\tfrac{1}{2}\epsilon^{\mu\nu\alpha\beta}(\nabla_\mu f)A_\nu F_{\alpha\beta}\,d^4x\nn\\
  &=\int_{\Omega}\tfrac{1}{2}\bigl[(\nabla_0 f)(A\cdot B)-A_0(B\cdot\nabla f)-\nabla f\cdot(A\times E)\bigr]d^4x,
  \label{h4}
\end{align}
where $f$ is an arbitrary scalar function in $\Omega$.
(It will become clear in what follows why in \eqref{h4} we use $\lambda_\mu=\nabla_\mu f$ instead of a general $\lambda_\mu$.)

For a non-compact $\Omega$, convergence of the integral in \eqref{h4} implies
\begin{align}
  &A=O(\norm{x}^p), \ \norm{x}\to\infty, \ p<-1-\tfrac{1}{2}q,\label{a4asympt}
\end{align}
where we assumed
\begin{align}
  &f=O(\norm{x}^q), \ \norm{x}\to\infty\label{f4asympt}
\end{align}
for a certain $q$.
Since $F=O(\norm{x}^{p-1})$, $\norm{x}\to\infty$, convergence of the integral in \eqref{e4} now implies $p<-1$.

Our definition \eqref{h4} is motivated by the following limiting case of covariant helivity $H(f)$. 
Suppose $\Omega=[t_1,t_2]\times\Omega'$, where $\Omega'\subset\R^3$, and $f$ is a function of $x^0=t$ only.
It follows that
\begin{align}
  H(f)=\int_{t_1}^{t_2}H'(t)(\partial f/\partial t)dt,
  \label{}
\end{align}
where $H'(t)$ is the non-covariant helicity of the vector potential $A_i(t,x)$. 
In particular, for the conserved non-covariant helicity $H'$, we find
\begin{align}
  H(f)=\left[f(t_2)-f(t_1)\right]H',
  \label{}
\end{align}
More generally, for an arbitrary $f$, \eqref{h4} implies
\begin{align}
  &H(f)=\tilde{H}(f)-\int_{\partial\Omega}\tfrac{1}{2}\epsilon^{\mu\nu\alpha\beta}fA_\nu F_{\alpha\beta}\,d^3\sigma_\mu,\label{htildeh4}\\
  &\tilde{H}(f)=\int_\Omega \tfrac{1}{4}\epsilon^{\mu\nu\alpha\beta}fF_{\mu\nu}F_{\alpha\beta}\,d^4 x,\label{tildeh4}
\end{align}
which means that $H(f)$ is a boundary term when consistency condition \eqref{epsff} is satisfied.

Under a gauge transformation
\begin{align}
  &A_\mu\mapsto A_\mu+\nabla_\mu g,\label{gauge4-a}
\end{align}
where $g$ is an arbitrary real-valued function in $\Omega$, the gauge invariance of $W$ is obvious, while the corresponding gauge transformation of the helicity is
\begin{align}
  H(f)\mapsto H(f)+\int_{\partial\Omega}\tfrac{1}{2}\epsilon^{\mu\nu\alpha\beta}(\nabla_\nu f)gF_{\alpha\beta} \,d^3\sigma_\mu.
  \label{h4transf}
\end{align}
Since we require gauge invariance of $H(f)$, we set
\begin{align}
  \bigl[\epsilon^{\mu\nu\alpha\beta}n_\mu(\nabla_\nu f)F_{\alpha\beta}\bigr]_{\partial\Omega}=0,
  \label{bc4}
\end{align}
where $n$ is the $4$-vector normal to $\partial\Omega$.
Using now \eqref{b4} with $\lambda_\mu=\nabla_\mu f$, we find
\begin{align}
  \left(n^\mu\nabla^\nu F_{\mu\nu}\right)_{\partial\Omega}=0,
  \label{bc41}
\end{align}
which a covariant version of the boundary condition \eqref{bc3}. 
For a non-compact $\Omega$, the asymptotic behavior $F=O(\norm{x}^{p-1})$, $\norm{x}\to\infty$ ensures the gauge invariance of $H(f)$ as well since the boundary integral in \eqref{h4transf} vanishes.

In terms of the $E$ and $B$ fields, the boundary condition \eqref{bc41} becomes
\begin{align}
  &\left[-n_0(\nabla f\cdot B)+(\nabla_0 f)(n\cdot B)+n\cdot(\nabla f\times E)\right]_{\partial\Omega}=0.
  \label{bc42}
\end{align}
In particular, for time-like and space-like hypersurface $\partial\Omega$ we have 
\begin{align}
  &(\nabla f\cdot B)_{\partial\Omega}=0 \textrm{\ for time-like\ } \partial\Omega,\\
  &\left[(\nabla_0 f)(n\cdot B)+n\cdot(\nabla f\times E)\right]_{\partial\Omega}=0 \textrm{\ for space-like\ } \partial\Omega.
  \label{}
\end{align}

We further emphasize the choice of the definition \eqref{h4} by the following theorem.
\begin{theorem}
Any extremal of the action functional $W=\int_{\Omega}\tfrac{1}{4}F_{\mu\nu}F^{\mu\nu} d^4x$ subject to the constraint $\epsilon^{\mu\nu\alpha\beta}F_{\mu\nu}F_{\alpha\beta}=0$ and the boundary condition $\delta A\vert_{\partial\Omega}=0$ satisfies the covariant Beltrami equation $\nabla^\nu F_{\mu\nu}=\tfrac{1}{2}\epsilon_{\mu\nu\alpha\beta}\lambda^\nu F^{\alpha\beta}$ for $\lambda_\mu=\nabla_\mu f$, where $f=f(x)$ is an arbitrary function.
  \label{theorem1}
\end{theorem}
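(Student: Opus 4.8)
The plan is to read the statement as a constrained extremization and settle it by the method of Lagrange multipliers. Because the constraint \eqref{epsff} is imposed pointwise in $\Omega$ rather than as a single integrated condition, the appropriate multiplier is a scalar \emph{function} $f=f(x)$ rather than a constant. I would therefore seek unconstrained extrema of the augmented functional
\[
  W-\tfrac12\tilde{H}(f)=\int_\Omega\tfrac14\Bigl(F_{\mu\nu}F^{\mu\nu}-\tfrac12 f\,\epsilon^{\mu\nu\alpha\beta}F_{\mu\nu}F_{\alpha\beta}\Bigr)\,d^4x ,
\]
with $\tilde{H}(f)$ as in \eqref{tildeh4} and the factor $\tfrac12$ chosen, as in the non-covariant Lagrangian \eqref{lagrangian3}, so as to land on the conventional normalization. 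Varying $f$ independently returns the constraint \eqref{epsff}, so the whole content lies in the variation with respect to $A_\mu$.

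First I would compute $\delta W$ in the usual way: writing $\delta F_{\mu\nu}=\nabla_\mu\delta A_\nu-\nabla_\nu\delta A_\mu$, using the antisymmetry of $F^{\mu\nu}$, and integrating by parts gives $\delta W=-\int_\Omega(\nabla_\mu F^{\mu\nu})\,\delta A_\nu\,d^4x$ plus a boundary term proportional to $F^{\mu\nu}\delta A_\nu$ that is annihilated by the imposed condition $\delta A\vert_{\partial\Omega}=0$.

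The crux is $\delta\tilde{H}(f)$, and this is the step I expect to carry the argument. Using $\delta(\epsilon^{\mu\nu\alpha\beta}F_{\mu\nu}F_{\alpha\beta})=4\,\epsilon^{\mu\nu\alpha\beta}(\nabla_\mu\delta A_\nu)F_{\alpha\beta}$ and integrating by parts, the derivative falls on the product $f\,\epsilon^{\mu\nu\alpha\beta}F_{\alpha\beta}$ and splits into one piece proportional to $f\,\epsilon^{\mu\nu\alpha\beta}\nabla_\mu F_{\alpha\beta}$ and one proportional to $(\nabla_\mu f)\,\epsilon^{\mu\nu\alpha\beta}F_{\alpha\beta}$. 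The essential observation is that the first piece vanishes identically by the Bianchi identity $\epsilon^{\mu\nu\alpha\beta}\nabla_\mu F_{\alpha\beta}=0$: although $\int_\Omega\epsilon^{\mu\nu\alpha\beta}F_{\mu\nu}F_{\alpha\beta}\,d^4x$ is topological and a constant weight would make its variation a pure boundary term, the spatial dependence of the multiplier survives only through $\nabla_\mu f$. Setting $\lambda_\mu=\nabla_\mu f$, this leaves $\delta\bigl(\tfrac12\tilde{H}(f)\bigr)=-\tfrac12\int_\Omega\lambda_\mu\,\epsilon^{\mu\nu\alpha\beta}F_{\alpha\beta}\,\delta A_\nu\,d^4x$, again up to a boundary term killed by $\delta A\vert_{\partial\Omega}=0$.

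Combining the two variations and demanding stationarity for arbitrary interior $\delta A_\nu$ then yields the Euler--Lagrange equation $\nabla_\mu F^{\mu\nu}=\tfrac12\lambda_\mu\,\epsilon^{\mu\nu\alpha\beta}F_{\alpha\beta}$, which after raising and relabeling indices is precisely the covariant Beltrami equation \eqref{b4} with $\lambda_\mu=\nabla_\mu f$. The point I would stress at the end is that the multiplier emerges as an exact gradient, so that it automatically satisfies \eqref{lambda4}, $\nabla_\mu\lambda_\nu-\nabla_\nu\lambda_\mu=0$; this both guarantees the consistency condition \eqref{b4lambda} and retroactively explains why the covariant helicity \eqref{h4} was defined with $\lambda_\mu=\nabla_\mu f$ rather than with a general $\lambda_\mu$.
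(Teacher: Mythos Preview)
Your proof is correct and follows essentially the same route as the paper: form the augmented functional $W-\tfrac12\tilde H(f)$ with a space-time dependent Lagrange multiplier $f(x)$, vary with respect to $A$, and use $\delta A\vert_{\partial\Omega}=0$ to discard the boundary term, arriving at \eqref{b4} with $\lambda_\mu=\nabla_\mu f$. Your presentation is in fact slightly more explicit than the paper's, since you spell out the integration by parts and invoke the Bianchi identity $\epsilon^{\mu\nu\alpha\beta}\nabla_\mu F_{\alpha\beta}=0$ to kill the term proportional to $f$, whereas the paper simply records the final form \eqref{var4}; your closing remark on $\lambda_\mu=\nabla_\mu f\Rightarrow$\eqref{lambda4} likewise anticipates the paper's comment immediately following the proof.
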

\begin{proof}
Any extremal of \eqref{e4} subject to the local constraint \eqref{epsff} must be an extremal of the functional $W-\tfrac{1}{2}\tilde{H}(f)$, where $f(x)$ is a space-time dependent Lagrange multiplier~\cite{vc}.
For an arbitrary variation of the gauge potential $\delta A$, we find
\begin{align}
  \delta(W-\tfrac{1}{2}\tilde{H}(f))&=\int_\Omega\bigl[-(\nabla_\mu F^{\mu\nu})+\tfrac{1}{2}\epsilon^{\mu\nu\alpha\beta}(\nabla_\mu f) F_{\alpha\beta}\bigr]\delta A_\nu\,d^4x\nn\\ &+\int_{\partial\Omega}(F^{\mu\nu}-\tfrac{1}{2}\epsilon^{\mu\nu\alpha\beta}f F_{\alpha\beta})\delta A_\nu \,d^3\sigma_\mu.
  \label{var4}
\end{align}
Using the boundary condition
\begin{align}
  \delta A\vert_{\partial\Omega}=0,
  \label{bc4var}
\end{align}
we arrive at \eqref{b4} with $\lambda_\mu=\nabla_\mu f$, which proves the theorem.
\end{proof}

Note that $\lambda_\mu=\nabla_\mu f$ derived in the proof implies \eqref{lambda4}, which we have already seen as a sufficient condition for \eqref{b4lambda} to be satisfied identically.

We now consider the covariant version of the conserved Noether current \cite{Jackiw}.
Equations \eqref{functional3}, \eqref{lagrangian3}, \eqref{delta-noether-L3}, \eqref{noether3def}, \eqref{noether3}, \eqref{current3} become 
\begin{align}
  &W-\tfrac{1}{2}\tilde{H}(f)=\int_\Omega L(f)\, d^4 x,\label{functional4} \\
  &L(f)=\tfrac{1}{4}F_{\mu\nu}F^{\mu\nu}-\tfrac{1}{8}\epsilon^{\mu\nu\alpha\beta}fF_{\mu\nu}F_{\alpha\beta},\label{lagrangian4} \\
  &\delta L(f)=0,\label{delta-noether4} \\
  &j^\mu(f)=\frac{\partial L(f)}{\partial\nabla_\mu A_\nu}\delta A_\nu, \label{noether4def} \\
  &j^\mu(f)=\left(F^{\mu\nu}-\tfrac{1}{2}\epsilon^{\mu\nu\alpha\beta}fF_{\alpha\beta}\right)\nabla_\nu g, \label{noether4} \\
  &k^\mu(f)=-\nabla_\nu F^{\mu\nu}+\tfrac{1}{2}\epsilon^{\mu\nu\alpha\beta}(\nabla_\nu f)F_{\alpha\beta}. \label{current4}
\end{align}
Now using the Beltrami equation \eqref{b4}, we find $k^\mu(f)=0$, so that the Noether current associated with the helicity, $\tfrac{1}{2}\epsilon^{\mu\nu\alpha\beta}(\nabla_\nu f)F_{\alpha\beta}$, equals the negative of the Maxwell current $J^\mu=-\nabla_\nu F^{\mu\nu}$.
As expected, there is only one independent conserved current in the problem.

The Noether energy-momentum tensor is
\begin{align}
  {\theta^\mu}_\nu(f)&=\frac{\partial L(f)}{\partial(\nabla_\mu A_\sigma)}\nabla_\nu A_\sigma-{\delta^\mu}_\nu L(f),\nn\\
  &=(F^{\mu\sigma}-\tfrac{1}{2}f\epsilon^{\mu\sigma\alpha\beta}F_{\alpha\beta})\nabla_\nu A_\sigma-{\delta^\mu}_\nu(\tfrac{1}{4}F_{\alpha\beta}F^{\alpha\beta}-\tfrac{1}{8}\epsilon^{\alpha\beta\gamma\delta}fF_{\alpha\beta}F_{\gamma\delta})
  \label{theta4}
\end{align}
and the corresponding energy-momentum $4$-vector is
\begin{align}
  &P_\nu(f)=\int_{\Omega'}{\theta^0}_\nu(f)\,d^3 x,\\
  &P_0(f)=-\int_{\Omega'}\bigl(\tfrac{1}{2}\norm{E}^2+\tfrac{1}{2}\norm{B}^2\bigr)\,d^3 x+\int_{\partial\Omega'}n\cdot(E+fB) A_0 \,d^2\sigma,\\
  &P_i(f)=-\int_{\Omega'}(E\times B)_i\, d^3 x+\int_{\partial\Omega'}n\cdot(E+fB)A_i \,d^2\sigma,
  \label{}
\end{align}
where we assumed $\Omega=[t_1,t_2]\times\Omega'$, $\Omega'\subset\R^3$.
We set the boundary condition
\begin{align}
  n\cdot(E+fB)\vert_{\partial\Omega'}=0
  \label{bc-p4}
\end{align}
and obtain the relation $P_\nu(f)=P_\nu(0)$ which is consitent with $L(f)-L(0)$ being a topological term.
Also note that although ${\theta^\mu}_\nu(f)$ is not gauge invariant, the resulting $P_\nu(f)$ is.

To prove conservation of ${\theta^\mu}_\nu(f)$, we need to use the Beltrami equation.
Indeed, in the expression 
\begin{align}
  \nabla_\mu{\theta^\mu}_\nu(f)&=\bigl((\nabla_\mu F^{\mu\sigma})-\tfrac{1}{2}\epsilon^{\mu\sigma\alpha\beta}(\nabla_\mu f)F_{\alpha\beta}\bigr)\nabla_\nu A_\sigma+\tfrac{1}{8}\epsilon^{\alpha\beta\gamma\delta}(\nabla_\nu f)F_{\alpha\beta}F_{\gamma\delta},
  \label{nabla-theta4}
\end{align}
the first term on the right-hand side vanishes for any solution of \eqref{b4} and the second term vanishes due to the constraint \eqref{epsff}. 
Since \eqref{epsff} follows from  \eqref{b4}, we conclude that the conservation equation $\nabla_\mu{\theta^\mu}_\nu(f)=0$ holds for any solution of the Beltrami equation.

\section{Non-abelian case}

So far we have worked with Maxwell's electromagnetism, which is an abelian gauge theory.
We anticipate applications of generalized helicity and Beltrami equation to non-abelian theories as well.
For example, in high temperature QCD with free quarks and gluons, conservation of chromomagnetic and chromoelectric helicity could affect the dynamics by restricting the evolution of configurations as the system cools.
This could apply to a range of situations from the early Universe to high energy nucleus-nucleus collisions at the LHC.
Furthermore, at lower energy per particle the hadronization process will involve chromoelectric fields confined to flux tubes and bags, so chromoelectric helicity conservation should play a role in determining decays and final states.

To proceed, we choose a non-abelian gauge group $G$, its algebraic generators $\{T_a\}$, and the corresponding structure constants $\{e_{abc}\}$ satisfying the commutation relation $[T_a,T_b]=e_{abc}T^c$.
The anti-hermitian generators are normalized according to $\tr{T_a T_b}=-\delta_{ab}$.
The gauge field $A_\mu=A^a_\mu T_a$ and the field strength $F_{\mu\nu}=F^a_{\mu\nu} T_a$ are elements of the algebra of $G$, and are related according to
\begin{align}
  F_{\mu\nu}=\nabla_\mu A_\nu-\nabla_\nu A_\mu+[A_\mu,A_\nu].
  \label{FA}
\end{align}
We also need the gauge covariant derivative of the field strength,
\begin{align}
  D_\alpha F_{\mu\nu}=\nabla_\alpha F_{\mu\nu}+[A_\alpha,F_{\mu\nu}].
  \label{}
\end{align}
Under a gauge transformation
\begin{align}
  &A_\mu\mapsto U^{-1}A_\mu U+U^{-1}\nabla_\mu U,\label{gauge4na-a}
\end{align}
where $U$ is an arbitrary $G$-valued function in $\Omega$, we have
\begin{align}
  &F_{\mu\nu}\mapsto U^{-1}F_{\mu\nu}U,\label{gauge4na-f}\\
  &D_\alpha F_{\mu\nu}\mapsto U^{-1}(D_\alpha F_{\mu\nu})U,\label{gauge4na-df}
\end{align}
  
To generalize the results of the previous section to a non-abelian group $G$, we need to ensure that all equations transform properly under the above gauge transformations.
Such generalizations are straightforward in most cases; for example, equations \eqref{b4def}, \eqref{b4}, \eqref{b4lambda} are replaced with
\begin{align}
  &{F_\alpha}^\mu D^\nu F_{\mu\nu}=0,\label{b4na-def}\\
  &D^\nu F_{\mu\nu}=\tfrac{1}{2}\epsilon_{\mu\nu\alpha\beta}\lambda^\nu F^{\alpha\beta},\label{b4na}\\
  &\epsilon^{\mu\nu\alpha\beta}F_{\alpha\beta}\nabla_\mu \lambda_\nu=0.\label{b4na-lambda}
\end{align}

We need to be careful, however, when generalizing \eqref{epsff}.
To derive the corresponding equation, we first note that \eqref{b4na-def} and \eqref{b4na} lead to
\begin{align}
  &F^{\gamma\mu}\epsilon_{\mu\nu\alpha\beta}\lambda^\nu F^{\alpha\beta}=0.\label{f-eps-lambda-f-na}
\end{align}
Proceeding as in the abelian case by considering the values $\gamma=0$ and $\gamma=i$ separately, we arrive at
\begin{align}
  &\lambda^0 E_i B^i+\epsilon_{ijk}\lambda^i E^j E^k=0,\label{consistency-na1}\\
  &\lambda^j(E^i B_j-B_j E^i)-\lambda^0\epsilon^{ijk}B_j B_k+\lambda^i B_j E^j=0,\label{consistency-na2}
\end{align}
where $G$-valued electric and magnetic fields are
\begin{align}
  &E_i=F_{i0},\\
  &B_i=\tfrac{1}{2}\epsilon_{ijk}F^{jk}.
  \label{}
\end{align}
Due to non-commutativity of the $E$ and $B$ fields, we cannot conclude from \eqref{consistency-na1} and \eqref{consistency-na2} that $E_i B^i=0$.
However, taking the trace of \eqref{consistency-na1} and \eqref{consistency-na2}, we arrive at the non-abelian consistency condition $\tr{E_i B^i}=0$, which we write in the covariant form generalizing \eqref{epsff},
\begin{align}
  \epsilon^{\mu\nu\alpha\beta}\tr{F_{\mu\nu}F_{\alpha\beta}}=0.\label{epsff-na}
\end{align}

Equations \eqref{e4}, \eqref{h4}, \eqref{htildeh4}, \eqref{tildeh4} become
\begin{align}
  &W=\int_{\Omega}\tfrac{1}{4}\tr{F_{\mu\nu}F^{\mu\nu}} d^4x,\label{e4na}\\
  &H(f)=-\int_\Omega\tfrac{1}{2}\epsilon^{\mu\nu\alpha\beta}(\nabla_\mu f)\tr{\bigl(A_\nu F_{\alpha\beta}-\tfrac{2}{3}A_\nu A_\alpha A_\beta\bigr)}\,d^4x,\label{h4na}\\
  &H(f)=\tilde{H}(f)-\int_{\partial\Omega}\tfrac{1}{2}\epsilon^{\mu\nu\alpha\beta}f\tr{\bigl(A_\nu F_{\alpha\beta}-\tfrac{2}{3}A_\nu A_\alpha A_\beta\bigr)}\,d^3\sigma_\mu,\label{htildeh4na}\\
  &\tilde{H}(f)=\int_\Omega\tfrac{1}{4} \epsilon^{\mu\nu\alpha\beta}f\tr{F_{\mu\nu}F_{\alpha\beta}}\,d^4 x,\label{tildeh4na}
\end{align}
where the gauge invariance of $H(f)$ requires the appearance of the well-known term cubic in $A$.

For a non-compact $\Omega$, convergence of the integral in \eqref{h4na} for $\norm{x}\to\infty$ implies
\begin{align}
  &A=O(\norm{x}^p), \ \norm{x}\to\infty, \ p<-1-\tfrac{1}{2}q, \ p<-1-\tfrac{1}{3}q, \label{a4na-asympt}
\end{align}
where we assumed
\begin{align}
  &f=O(\norm{x}^q), \ \norm{x}\to\infty\label{f4na-asympt}
\end{align}
for a certain $q$.
Since $F=O(\norm{x}^{p-1})$, $\norm{x}\to\infty$, convergence of the integral in \eqref{e4na} now implies $p<-1$.

Under the gauge transformation \eqref{gauge4na-a}, the invariance of \eqref{e4na} is obvious, while the corresponding transformation of \eqref{h4na} is
\begin{align}
  H(f)\mapsto H(f)&-\int_{\partial\Omega}\epsilon^{\alpha\beta\mu\nu}f\tr\nabla_\alpha\bigl((\nabla_\mu U)U^{-1} A_\nu\bigr)\,d^3\sigma_\beta\nn\\
  &-\int_{\partial\Omega}\tfrac{1}{3}\epsilon^{\alpha\beta\mu\nu}f\tr U^{-1}(\nabla_\mu U)U^{-1}(\nabla_\nu U)U^{-1}(\nabla_\alpha U)\,d^3\sigma_\beta.\label{h4na-transf}
\end{align}
We note two significant differences between \eqref{h4na-transf} and its abelian counterpart \eqref{h4transf}.

First, vanishing of the first integral in \eqref{h4na-transf} leads to a more restrictive boundary condition than the similar procedure for \eqref{h4transf}.
To see this, we evaluate the first integral in \eqref{h4na-transf} for an infinitesimal transformation with the gauge function 
\begin{align}
  U=\exp{(g)}, \ g\to 0\label{gauge4inf}
\end{align}
and find
\begin{align}
\int_{\partial\Omega}\epsilon^{\alpha\beta\mu\nu}f\tr\nabla_\alpha\bigl((\nabla_\mu U)U^{-1} A_\nu\bigr)\,d^3\sigma_\beta =\int_{\partial\Omega}\tfrac{1}{2}\epsilon^{\alpha\beta\mu\nu}(\nabla_\mu f)\tr g\bigl(F_{\nu\alpha}-[A_\nu,A_\alpha]\bigr)\,d^3\sigma_\beta +O(g^2).
  \label{h4na-transf-inf}
\end{align}
For the $O(g)$ term in \eqref{h4na-transf-inf} to vanish for any $g$, we need to impose the condition
\begin{align}
  \bigl(\epsilon^{\alpha\beta\mu\nu}n_\beta(\nabla_\mu f)(F_{\nu\alpha}-[A_\nu,A_\alpha])\bigr)_{\partial\Omega}=0.\label{bc4na}
\end{align}
Since the quantity $(F_{\nu\alpha}-[A_\nu,A_\alpha])_{\partial\Omega}$ is not gauge invariant, we conclude that \eqref{bc4na} requires
\begin{align}
  (\nabla_\mu f)_{\partial\Omega}=0.
  \label{bc4na1}
\end{align}
Now the Beltrami equation \eqref{b4na} with $\lambda_\mu=\nabla_\mu f$ implies
\begin{align}
  (D^\nu F_{\mu\nu})_{\partial\Omega}=0.
  \label{bc4na2}
\end{align}
Equations \eqref{bc4na1} and \eqref{bc4na2} are the boundary conditions needed for the invariance of $H(f)$ under transformations with the gauge function of the form \eqref{gauge4inf}. 
We see that the nonabelian boundary conditions \eqref{bc4na1} and \eqref{bc4na2} are more restrictive than their abelian counterpart \eqref{bc41} because \eqref{bc4na2} implies 
\begin{align}
  \left(n^\mu D^\nu F_{\mu\nu}\right)_{\partial\Omega}=0,
  \label{bc4na3}
\end{align}
which is the nonabelian generalization of \eqref{bc41}, but \eqref{bc4na3} does not imply either \eqref{bc4na1} or \eqref{bc4na2}.

Transformations with gauge functions that can be written in the exponential form \eqref{gauge4inf} (with $g$ not necessarily small) are called small gauge transformations because they are homotopically equivalent to the identity transformation.
Gauge functions for all other transformations, which are called large gauge transformations, cannot be written in the exponential form \eqref{gauge4inf} and are topologically nontrivial.
This brings us to the second distinction of the nonabelian case, namely, that the analog of the second integral in \eqref{h4na-transf} does not appear in \eqref{h4transf}.
This term depends only on $f$ and the gauge function $U$, and its independence from the gauge field $A$ is significant. 
Further note that the boundary condition \eqref{bc4na1} implies that $f$ is constant on the boundary $\partial\Omega$, which leads to the second integral in \eqref{h4na-transf} being a constant times a topological invariant
\begin{align}
  n=\int_{\partial\Omega}\epsilon^{\alpha\beta\mu\nu}\tr U^{-1}(\nabla_\mu U)U^{-1}(\nabla_\nu U)U^{-1}(\nabla_\alpha U)\,d^3\sigma_\beta.\label{}
\end{align}
The invariant $n$ is the winding number of the mapping from $\partial\Omega$ into the gauge group $G$. 
If $\partial\Omega\simeq\S^3$ and $G$ is compact, then $n$ is an integer since the corresponding homotopy group of maps $\partial\Omega\to G$ is $\pi_3(G)\simeq\Z$. 
However, if $\partial\Omega$ is not topologically equivalent to $\S^3$, than the homotopy group might be different. 

Even though the asymptotic behavior for $A$ and $f$ are fixed by \eqref{a4na-asympt} and \eqref{f4na-asympt}, the integrals in \eqref{h4na-transf} can still diverge for a non-compact $\Omega$ unless we specify appropriate asymptotic conditions for $U$.
Without loss of generality, we assume
\begin{align}
  U=\pm I+O(\norm{x}^r), \ \norm{x}\to\infty, \ r<0,\label{u4na-asympt}
\end{align}
where $I$ is the unit matrix.
The plus-minus sign here corresponds to the two possible signs of $\det{U}$, which account for the cases of proper rotations versus rotations combined with the inversion with respect to the origin.  
Vanishing of the first and second integral in \eqref{h4na-transf} implies $2p+q+r+2<0$ and $q+3r<0$, respectively.
With the above conditions satisfied, the helicity $H(f)$ is invariant with respect to both small and large gauge transformations.

We can now prove the following analog of Theorem~\ref{theorem1}.
\begin{theorem}
Any extremal of the action functional $W=\int_{\Omega}\tfrac{1}{4}\tr{F_{\mu\nu}F^{\mu\nu}} d^4x$ subject to the constraint $\epsilon^{\mu\nu\alpha\beta}\tr{F_{\mu\nu}F_{\alpha\beta}}=0$ and the boundary condition $\delta A\vert_{\partial\Omega}=0$ satisfies the covariant Beltrami equation $D^\nu F_{\mu\nu}=\tfrac{1}{2}\epsilon_{\mu\nu\alpha\beta}\lambda^\nu F^{\alpha\beta}$ for $\lambda_\mu=\nabla_\mu f$, where $f=f(x)$ is an arbitrary function.
  \label{theorem2}
\end{theorem}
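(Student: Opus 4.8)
The plan is to mirror the proof of Theorem~\ref{theorem1}, replacing each ordinary operation by its gauge-covariant counterpart and checking that every integration by parts respects the non-abelian structure. As in the abelian case, an extremal of $W$ in \eqref{e4na} subject to the local constraint \eqref{epsff-na} must be an extremal of $W-\tfrac{1}{2}\tilde{H}(f)$ with $\tilde H(f)$ as in \eqref{tildeh4na}, where $f=f(x)$ is a spacetime-dependent Lagrange multiplier~\cite{vc}. I would therefore compute $\delta(W-\tfrac{1}{2}\tilde H(f))$ for an arbitrary variation $\delta A$, retaining all boundary terms, and only at the end impose $\delta A\vert_{\partial\Omega}=0$.

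The variation of the field strength is $\delta F_{\mu\nu}=D_\mu\delta A_\nu-D_\nu\delta A_\mu$, and the one tool needed for every integration by parts is the identity $\nabla_\mu\tr{(XY)}=\tr{((D_\mu X)Y)}+\tr{(X D_\mu Y)}$, which holds because the connection terms cancel under the trace. For the Yang-Mills term, $\delta W=\int_\Omega\tfrac{1}{2}\tr{(\delta F_{\mu\nu}F^{\mu\nu})}\,d^4x=\int_\Omega\tr{((D_\mu\delta A_\nu)F^{\mu\nu})}\,d^4x$, and one covariant integration by parts produces the bulk contribution $-\int_\Omega\tr{((D_\mu F^{\mu\nu})\delta A_\nu)}\,d^4x$ together with a boundary term. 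The appearance of the covariant divergence $D_\mu F^{\mu\nu}$, rather than $\nabla_\mu F^{\mu\nu}$, is precisely what is required to reproduce the left-hand side of \eqref{b4na}.

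The crux is the variation of $\tilde H(f)$. Here $\delta\tilde H(f)=\int_\Omega\tfrac{1}{2}\epsilon^{\mu\nu\alpha\beta}f\,\tr{(\delta F_{\mu\nu}F_{\alpha\beta})}\,d^4x=\int_\Omega\epsilon^{\mu\nu\alpha\beta}f\,\tr{((D_\mu\delta A_\nu)F_{\alpha\beta})}\,d^4x$, where the antisymmetry of $\epsilon^{\mu\nu\alpha\beta}$ has been used. A covariant integration by parts splits this into a term proportional to $\epsilon^{\mu\nu\alpha\beta}\tr{(\delta A_\nu\,D_\mu F_{\alpha\beta})}$ and a total ordinary divergence $\epsilon^{\mu\nu\alpha\beta}\nabla_\mu\bigl(f\,\tr{(\delta A_\nu F_{\alpha\beta})}\bigr)$. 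The first term vanishes identically by the Bianchi identity $\epsilon^{\mu\nu\alpha\beta}D_\mu F_{\alpha\beta}=0$; this is the single step where the non-abelian structure (and, equivalently, the cubic term built into the helicity \eqref{h4na}) must conspire correctly, and I expect it to be the main obstacle to get right. Only the divergence then survives, and integrating it by parts once more to move $\nabla_\mu$ onto $f$ yields the bulk piece $-\int_\Omega\epsilon^{\mu\nu\alpha\beta}(\nabla_\mu f)\tr{(\delta A_\nu F_{\alpha\beta})}\,d^4x$ plus a further boundary term.

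Finally I would assemble the two variations, so that the bulk integrand is $\tr{\bigl[\bigl(-D_\mu F^{\mu\nu}+\tfrac{1}{2}\epsilon^{\mu\nu\alpha\beta}(\nabla_\mu f)F_{\alpha\beta}\bigr)\delta A_\nu\bigr]}$, and impose $\delta A\vert_{\partial\Omega}=0$ to annihilate every boundary term. Since $\delta A_\nu=\delta A_\nu^a T_a$ is arbitrary and the Killing form $\tr{(T_a T_b)}=-\delta_{ab}$ is nondegenerate, each algebra component of the bracket must vanish, giving $D_\mu F^{\mu\nu}=\tfrac{1}{2}\epsilon^{\mu\nu\alpha\beta}(\nabla_\mu f)F_{\alpha\beta}$. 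Relabeling indices as in \eqref{b4na} produces the covariant non-abelian Beltrami equation with $\lambda_\mu=\nabla_\mu f$, which proves the theorem. The only genuinely new ingredient relative to Theorem~\ref{theorem1} is the pairing of covariant integration by parts with the Bianchi identity in the variation of $\tilde H(f)$; the remainder is formally identical to the abelian computation.
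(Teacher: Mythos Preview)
Your proposal is correct and follows essentially the same route as the paper: introduce the Lagrange-multiplier functional $W-\tfrac{1}{2}\tilde H(f)$, vary, integrate by parts, and read off \eqref{b4na} with $\lambda_\mu=\nabla_\mu f$; you simply make explicit the covariant integration-by-parts identity and the Bianchi step that the paper compresses into the single formula \eqref{var4-na}. One expository slip: your ``covariant integration by parts'' of $\epsilon^{\mu\nu\alpha\beta}f\,\tr\bigl((D_\mu\delta A_\nu)F_{\alpha\beta}\bigr)$ actually yields $f\,\nabla_\mu\tr(\delta A_\nu F_{\alpha\beta})$ (not the total divergence $\nabla_\mu\bigl(f\,\tr(\delta A_\nu F_{\alpha\beta})\bigr)$) together with the Bianchi term, and it is the subsequent product-rule step that separates off the $(\nabla_\mu f)$ bulk piece and the genuine boundary term---your final integrand and conclusion are unaffected.
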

\begin{proof}
Any extremal of \eqref{e4na} subject to the local constraint \eqref{epsff-na} must be an extremal of the functional $W-\tfrac{1}{2}\tilde{H}(f)$, where $f(x)$ is a space-time dependent Lagrange multiplier~\cite{vc}.
For an arbitrary variation of the gauge potential $\delta A$, we find
\begin{align}
  \delta(W-\tfrac{1}{2}\tilde{H}(f))&=\int_\Omega\tr{\delta A_\mu\bigl(D_\nu F^{\mu\nu}-\tfrac{1}{2}\epsilon^{\mu\nu\alpha\beta}(\nabla_\nu f) F_{\alpha\beta}\bigr)}\,d^4x\nn\\ &+\int_{\partial\Omega}\tr{\delta A_\mu\bigl(-F^{\mu\nu}+\tfrac{1}{2}\epsilon^{\mu\nu\alpha\beta}f F_{\alpha\beta}\bigr)} \,d^3\sigma_\nu.
  \label{var4-na}
\end{align}
Using the boundary condition
\begin{align}
  \delta A\vert_{\partial\Omega}=0,
  \label{bc4var-na}
\end{align}
we arrive at \eqref{b4na} with $\lambda_\mu=\nabla_\mu f$, which proves the theorem.
\end{proof}

We now consider the nonabelian version of the conserved Noether current \cite{Jackiw}.
Equations \eqref{functional4}, \eqref{lagrangian4}, \eqref{delta-noether4}, \eqref{noether4def}, \eqref{noether4}, \eqref{current4} become 
\begin{align}
  &W-\tfrac{1}{2}\tilde{H}(f)=\int_\Omega L(f)\, d^4 x,\label{functional4-na} \\
  &L(f)=\tfrac{1}{4}\tr{F_{\mu\nu}\bigl(F^{\mu\nu}}-\tfrac{1}{2}\epsilon^{\mu\nu\alpha\beta}fF_{\alpha\beta}\bigr),\label{lagrangian4-na} \\
  &\delta L(f)=0,\label{delta-noether4-na} \\
  &j^\mu(f)=\frac{\partial L(f)}{\partial\nabla_\mu A^a_\nu}\delta A^a_\nu, \label{noether4def-na} \\
  &j^\mu(f)=\tr{\bigl(F^{\mu\nu}-\tfrac{1}{2}\epsilon^{\mu\nu\alpha\beta}fF_{\alpha\beta}\bigr)D_\nu g}, \label{noether4-na} \\
  &k^\mu(f)=-D_\nu F^{\mu\nu}+\tfrac{1}{2}\epsilon^{\mu\nu\alpha\beta}(\nabla_\nu f)F_{\alpha\beta}. \label{current4-na}
\end{align}
Similarly to the abelian case, using now the Beltrami equation \eqref{b4na}, we find $k^\mu(f)=0$, so that the Noether current associated with the helicity, $\tfrac{1}{2}\epsilon^{\mu\nu\alpha\beta}(\nabla_\nu f)F_{\alpha\beta}$, equals the negative of the Yang-Mills current $J^\mu=-D_\nu F^{\mu\nu}$.
As expected, there is only one independent conserved current in the problem.

The Noether energy-momentum tensor is
\begin{align}
  {\theta^\mu}_\nu(f)&=\frac{\partial L(f)}{\partial(\nabla_\mu A^a_\sigma)}\nabla_\nu A^a_\sigma-{\delta^\mu}_\nu L(f),\nn\\
  &=\tr{\bigl\{(F^{\mu\sigma}-\tfrac{1}{2}f\epsilon^{\mu\sigma\alpha\beta}F_{\alpha\beta})\nabla_\nu A_\sigma-{\delta^\mu}_\nu(\tfrac{1}{4}F_{\alpha\beta}F^{\alpha\beta}-\tfrac{1}{8}\epsilon^{\alpha\beta\gamma\delta}fF_{\alpha\beta}F_{\gamma\delta})\bigr\}}
  \label{theta4-na}
\end{align}
and the corresponding energy-momentum $4$-vector is
\begin{align}
  &P_\nu(f)=\int_{\Omega'}{\theta^0}_\nu(f)\,d^3 x,\\
  &P_0(f)=-\int_{\Omega'}\tr{\bigl(\tfrac{1}{2}\norm{E}^2+\tfrac{1}{2}\norm{B}^2\bigr)}\,d^3 x+\int_{\partial\Omega'}\tr{\{n\cdot(E+fB) A_0\}} \,d^2\sigma,\\
  &P_i(f)=-\int_{\Omega'}\tr{(E\times B)_i}\,d^3 x+\int_{\partial\Omega'}\tr{\{n\cdot(E+fB)A_i\}}\,d^2\sigma.
  \label{}
\end{align}
where we assumed $\Omega=[t_1,t_2]\times\Omega'$.
We set the boundary condition
\begin{align}
  \tr{\{n\cdot(E+fB)\}}\vert_{\partial\Omega'}=0
  \label{bc-p4-na}
\end{align}
and, similarly to the abelian case, obtain the relation $P_\nu(f)=P_\nu(0)$ which is consitent with $L(f)-L(0)$ being a topological term.
Also note that although ${\theta^\mu}_\nu(f)$ is not gauge invariant, the resulting $P_\nu(f)$ is.

To prove conservation of ${\theta^\mu}_\nu(f)$, we need to use the Beltrami equation.
Indeed, in the expression 
\begin{align}
  \nabla_\mu{\theta^\mu}_\nu(f)&=\tr{\bigl\{\bigl((D_\mu F^{\mu\sigma})-\tfrac{1}{2}\epsilon^{\mu\sigma\alpha\beta}(\nabla_\mu f)F_{\alpha\beta}\bigr)\nabla_\nu A_\sigma+\tfrac{1}{8}\epsilon^{\alpha\beta\gamma\delta}(\nabla_\nu f)F_{\alpha\beta}F_{\gamma\delta}\bigr\}}
  \label{nabla-theta4-na}
\end{align}
we see that the first term on the right-hand side vanishes for any solution of \eqref{b4na} and the second term vanishes due to the constraint \eqref{epsff-na}. 
Since \eqref{epsff-na} follows from \eqref{b4na}, we conclude that the conservation equation $\nabla_\mu{\theta^\mu}_\nu(f)=0$ holds for any solution of the non-abelian Beltrami equation.

\section{Conclusions}

We have generalized the magnetic helicity and Beltrami equation to relativistic and non-abelian forms.
In the process, we discussed various interconnected features associated with these generalizations. 
In particular, we found that the helicity is related to the Chern-Simons action and can also be viewed as a constraint requiring the vanishing of a generalized instanton term.

Besides its theoretical appeal, the covariant formulation of the magnetic helicity and Beltrami equation has an experimental advantage as well.
It turns out that, for an ideal nonrelativistic plasma, charges flow until the electric field are completely shorted out.
In the relativistic case, even for an ideal plasma, however, the current flow may not be able to keep up, and so the electric fields do not necessarily always vanish.
Some possible applications of our results for the relativistic generalization of the Beltrami equation may be found for dynamos inside millisecond pulsars, pulsar and quasar atmospheres, collisions of plasma shock waves with other shocks or gas clouds and nuclear fusion via laser confinement.

The generalization to the nonabelian case is straightforward but interesting since several further features arise.
We have already briefly mentioned a few systems where our results could be useful.
They may further apply to high energy QCD collisions ranging from relativistic heavy ion collisions, where a liquid state has been suggested, to hadronization processes in high energy elementary particle collisions.
There may also be applications to the prehadronic early universe.
We hope to explore some of these topics in the future.

Explicit solutions of the covariant and non-abelian Beltrami equations are of particular interest for applications, and we will address these elsewhere.

\appendix*
\section{Main results in terms of differential forms}

It is well known that formulation of a gauge theory in terms of differential forms often adds conceptual clarity and computational convenience.
The generalized helicity and Beltrami fields are no exception in this regard.
Here we use differential forms to state our main results for the non-abelian  helicity and Beltrami fields.
These should be sufficient for the interested reader to easily fill out the remaining details and derive corresponding relations for the abelian covariant and non-covariant cases. 

Introducing the forms 
\begin{align}
  &A=A_\mu dx^\mu,\\
  &F=dA+A\wedge A,\\
  &F=\tfrac{1}{2}F_{\mu\nu}dx^\mu\wedge dx^\nu,\\
  &\lambda=\lambda_\mu dx^\mu
  \label{}
\end{align}
we rewrite \eqref{b4na-def}, \eqref{b4na}, \eqref{b4na-lambda}, \eqref{epsff-na} as
\begin{align}
  &*F\wedge D*F=0,\label{b4def-df}\\
  &D*F=\lambda\wedge F,\label{b4-df}\\
  &F\wedge d\lambda=0,\label{b4lambda-df}\\
  &\tr{(F\wedge F)}=0.\label{epsff-df}
\end{align}
Note that the left-hand side of \eqref{b4lambda-df} vanishes identically if we set $d\lambda=0$, which follows from $\lambda=df$ found in the proof of Theorem \ref{theorem2}.

The action \eqref{e4na} and helicity \eqref{h4na} become
\begin{align}
  &W=\int_\Omega\tfrac{1}{2}\tr{(F\wedge *F)},\label{e4na-diff}\\
  &H(f)=-\int_\Omega df\wedge\tr{\bigl(A\wedge F-\tfrac{1}{3}A\wedge A\wedge A\bigr)},\label{h4na-diff}
\end{align}
while \eqref{htildeh4na} and \eqref{tildeh4na} become
\begin{align}
  &H(f)=\tilde H(f)-\int_{\partial\Omega}f\tr{\bigl(A\wedge F-\tfrac{1}{3}A\wedge A\wedge A\bigr)}, \\ 
  &\tilde H(f)=\int_\Omega f\tr{(F\wedge F)}.
  \label{}
\end{align}
Under a gauge transformation
\begin{align}
  &A\mapsto U^{-1}AU+U^{-1}dU,\label{gauge4na-diff-a}\\
  &F\mapsto U^{-1}FU,\label{gauge4na-diff-f}
\end{align}
the invariance of \eqref{e4na-diff} is obvious and \eqref{h4na-transf} becomes
\begin{align}
  &H(f)\mapsto H(f)-\int_{\partial\Omega}f\tr{d\bigl(dUU^{-1}\wedge A\bigr)}-\int_{\partial\Omega}\tfrac{1}{3}f\tr{\bigl(dUU^{-1}\wedge dUU^{-1}\wedge dUU^{-1}\bigr)}.
  \label{}
\end{align}
The boundary conditions \eqref{bc4na1}, \eqref{bc4na2}, \eqref{bc4na3} become
\begin{align}
  &df\vert_{\partial\Omega}=0,\\
  &(D*F)_{\partial\Omega}=0,\\
  &(*n\wedge D*F)_{\partial\Omega}=0,
  \label{}
\end{align}
where $n=n_\mu dx^\mu$.
For the Noether current \eqref{current4-na} we have
\begin{align}
  &k(f)=k_\mu(f)dx^\mu, \\
  &k(f)=*(D*F-df\wedge F).
  \label{}
\end{align}
The Beltrami equation \eqref{b4-df} now gives $k(f)=0$.

\begin{acknowledgments}
We thank Kieth Moffatt for pointing out the early work by Carter on generalizing helicity.
We are greatful to the Isaac Newton institute for hospitality while this work was being carried out.
RVB thanks Chapman University for support.
The work of TWK was supported by U.S. DoE grant number DE-FG05-85ER40226 and Vanderbilt University College of Arts and Sciences.
\end{acknowledgments}

\end{document}